\newcommand{\OPT}{\ensuremath{\text{OPT}}}
\newcommand{\DT}{\ensuremath{\text{DT}}}
\newcommand{\MST}{\ensuremath{\text{MST}}}
\newtheorem{theorem}{Theorem}
\newtheorem{lemma}{Lemma}
\newtheorem{corollary}{Corollary}
\title{Near-Linear-Time Deterministic \\ Plane Steiner Spanners \\for Well-Spaced Point Sets}
\author{Glencora Borradaile\thanks{School of Electrical Engineering
    and Computer Science, Oregon State University.  Based on work supported by the National Science Foundation under grant CCF-0963921.}
 \and David Eppstein\thanks{Department of Computer Science, University of California, Irvine. Supported in part by the National Science Foundation under grants 0830403 and 1217322, and by the Office of Naval Research under MURI grant N00014-08-1-1015.}}
\begin{document}
\thispagestyle{empty}
\maketitle 

\begin{abstract}
We describe an algorithm that takes as input $n$ points in the plane and a parameter $\epsilon$, and produces as output an embedded planar graph having the given points as a subset of its vertices in which the graph distances are a $(1+\epsilon)$-approximation to the geometric distances between the given points. For point sets in which the Delaunay triangulation has sharpest angle $\alpha$, our algorithm's output has $O(\frac{\beta^2}{\epsilon}n)$ vertices, its weight is $O(\frac{\beta}{\alpha})$ times the minimum spanning tree weight where $\beta = \frac{1}{\alpha\epsilon}\log\frac{1}{\alpha\epsilon}$.  The algorithm's running time, if a Delaunay triangulation is given, is linear in the size of the output. We use this result in a similarly fast deterministic approximation scheme for the traveling salesperson problem.
\end{abstract}

\section{Introduction}

A \emph{spanner} of a set of points in a geometric space is a sparse graph having those points as its vertices, and with its edge lengths equal to the geometric distance between the endpoints, such that the graph distance between any two points accurately approximates their geometric distance~\cite{Epp-HCG-00}. The \emph{dilation} of a spanner is the smallest number $\delta$ for which the graph distance of every pair of points is at most $\delta$ times their geometric distance. It has long been known that very good spanners exist: for every constant $\epsilon>0$ and constant dimension $d$, it is possible to find a spanner for every set of $n$ points in $O(n\log n)$ time such that the dilation of the spanner is at most $1+\epsilon$, its weight is at most a constant times the weight of the minimum spanning tree, and its degree is constant~\cite{AryDasMou-STOC-95}.

A spanner is \emph{plane} if no two of its edges (represented as planar line segments) intersect except at their shared endpoints~\cite{BosSmi-09}.  Plane spanners with bounded dilation are known; for instance, the Delaunay triangulation is such a spanner~\cite{BosDevLof-CCCG-09}. However, it is not possible for these spanners to have dilation arbitrarily close to one. For instance, for four points at the corners of a square, any plane graph must avoid one of the diagonals and have dilation at least $\sqrt 2$. However, the addition of \emph{Steiner points} allows smaller dilation for pairs of original points. For instance, the plane graph formed by overlaying all possible line segments between pairs of input points has dilation exactly one, although its $\Theta(n^4)$ combinatorial complexity is high. Less trivially, in the \emph{pinwheel tiling}, a certain aperiodic tiling of the plane, any two vertices of the tiling at geometric distance $D$ from each other have graph distance $D+o(D)$~\cite{RadSad-CMP-96}. We define a \emph{plane Steiner $\delta$-spanner} for a set of points to be a graph that contains the points as a subset of its vertices, is embedded with straight line edges and no crossings in the plane, and achieves dilation $\delta$ for pairs of points in the original point set. We do not require pairs of points that are not both original to be connected by short paths.

Arikati et~al.~\cite{accdsz-psasp-96} show how to construct a plane
Steiner spanner in $O(n \log n)$ time, but do not bound the total
weight of the graph.  Of course, spanners may also be constructed by forming an
arrangement of line segments~\cite{ChaEde-JACM-92} representing the
edges of a nonplanar spanner graph; this planarization does not change
the spanner's weight, but may add a large number of edges and
vertices.  A paper of Klein~\cite{Klein06} on graph spanners provides
an alternative basis for plane Steiner spanner
construction. Generalizing a previous result of Alth\"ofer et
al.~\cite{AltDasDob-DCG-93}, Klein shows that any $n$-vertex planar
graph with a specified subset of vertices may be thinned to provide a
planar Steiner $(1+\epsilon)$-spanner for the graph distances on the
specified subset, with weight $O(1/\epsilon^4)$ times the weight of
the minimum Steiner tree of the subset, in time $O((n\log
n)/\epsilon)$. Klein combined this result with methods from another
paper~\cite{Klein08} to provide a polynomial time approximation scheme
for the traveling salesperson problem in weighted planar graphs. Using
Klein's method to reduce the weight of the geometric spanner formed by
the arrangement of all line segments connecting pairs of a given point
set would lead to a low weight plane $(1+\epsilon)$ Steiner spanner
for the point set, but again with a large number of vertices and
edges. Ideally, we would prefer plane Steiner spanners that not only
have low weight, but also have a linear number of edges and vertices.

Small and low-weight plane Steiner spanners in turn could be used with Klein's planar graph algorithms to derive a deterministic
polynomial time approximation scheme for the Euclidean TSP. The
previous randomly shifted quadtree approximation scheme of
Arora~\cite{Aro-JACM-98} and guillotine subdivision approximation
scheme of Mitchell~\cite{Mit-SJC-99} have runtimes that are
polynomial for fixed $\epsilon$ but with an exponent depending on
$\epsilon$; in contrast, Klein's method takes time
linear in the spanner size for any fixed $\epsilon$. However, combining Klein's method with the nonlinear-size Steiner spanners described above would not
improve on a different deterministic TSP approximation scheme announced by Rao and Smith~\cite{RaoSmi-STOC-98}. Their method is based on
\emph{banyans}, a generalized type of spanner that must accurately
approximate all Steiner trees, and it takes $O(n\log n)$
time for any fixed $\epsilon$ and any fixed dimension, although its details do
not appear to have been published yet.

These past results raise several questions. Are banyans necessary for fast TSP approximation, or is it possible to make do with more vanilla forms of spanners? How quickly may low-weight plane Steiner spanners be constructed, and how quickly may the TSP be approximated? And how few vertices are necessary in a plane Steiner spanner?

In this work we provide some partial answers, for planar point sets
that are well-spaced in the sense that their Delaunay triangulation
avoids angles sharper than $\alpha$ for some $\alpha$. We show that, when both $\epsilon$ and $\alpha$ are bounded by fixed
constants, there exist plane Steiner $(1+\epsilon)$-spanners with
$O(\frac{\beta^2}{\epsilon}n)$ vertices whose weight is $O(\frac{\beta}{\alpha})$ times the minimum spanning tree
weight where $\beta = \frac{1}{\alpha\epsilon}\log\frac{1}{\alpha\epsilon}$.  Note that the weight depends linearly on $\frac{1}{\epsilon}$, improving the
quartic dependence given by Klein's thinning procedure, which additionally has $O(n^4)$ vertices.  Our spanners may be constructed in linear time given the Delaunay triangulation.  In order to use our spanner for approximating Euclidean TSP, we may assume that our points have integer coordinates and so we can use the fast Delaunay triangulation algorithm of Buchin and Mulzer~\cite{BucMul-JACM-11} with fast integer sorting algorithms~\cite{HanTho-FOCS-02}) to find the triangulation in time $O(n\sqrt{\log\log n})$.  Combining our spanners with the methods from Klein~\cite{Klein08} leads to near-linear-time TSP approximation for the same class of point sets.

\section{Delaunay triangulations without sharp angles}

The \emph{Delaunay triangulation} $\DT$ of a set $S$ of points (called \emph{sites}) is a triangulation in which the circumcircle of each triangle does not contain any sites in its interior. For points in general position (no four cocircular) the Delaunay triangulation is uniquely defined and its sharpest angle~$\alpha$ is at least as large as the sharpest angle in any other triangulation.
As we show in this section, Delaunay triangulations that do not have
any triangles with sharp angles have two key properties:
\begin{enumerate}
\item Their total weight $w(\DT)$ is small relative to the weight
  $w(\MST)$ of the minimum spanning tree. 
\item Every point in the plane is covered by only a few circumcircles.
\end{enumerate}
We use the first property to bound the total
weight of the final spanner as each edge we add will have length at
most that of the Delaunay triangle in which it is embedded.  In order
to approximate a the length of a line, we will charge the error incurred to a
chord given by the intersection of the line with the interior of a Delaunay circumcircle.  Since
a part of the line may be enclosed by multiple circumcircles, the
error we charge will multiply by this factor.  By bounding this
factor, using the second property, we bound the total error.

\begin{lemma} \label{lem:wt-dt}
  \[w(\DT) \le f_w(\alpha) w(\MST) \text{ where } f_w(\alpha) =
  \frac{1+\cos\alpha}{1-\cos \alpha}.\]
\end{lemma}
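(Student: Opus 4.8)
The plan is to combine one elementary trigonometric fact with a global charging scheme that matches each Delaunay edge to a Delaunay triangle via the dual tree.

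First I would isolate the geometric ingredient that carries the constant: in any triangle whose three angles are all at least $\alpha$, each side has length at most $\cos\alpha$ times the sum of the other two. This follows immediately from the projection formula $c = a\cos B + b\cos A$ (valid for every triangle, acute or obtuse): since $A,B\ge\alpha$ and cosine is decreasing on $[0,\pi]$, we have $\cos A,\cos B\le\cos\alpha$, hence $c\le\cos\alpha\,(a+b)$. What matters is that the multiplier is $\cos\alpha<1$, which is exactly what makes a geometric series close up to $\frac{1}{1-\cos\alpha}$; a cruder bound such as $c\le 2\cos\alpha\min(a,b)$ from the law of sines would be useless here. Every Delaunay triangle has all angles at least $\alpha$ by hypothesis, so this applies to all of them.

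Next I would set up the charge. Assuming general position, $\MST$ is a subgraph of $\DT$ and is a spanning tree of the plane graph $\DT$, so the duals of the non-tree edges $E(\DT)\setminus E(\MST)$ form a spanning tree $T^{\ast}$ of the dual graph $\DT^{\ast}$. Rooting $T^{\ast}$ at the outer-face vertex makes each bounded face — each Delaunay triangle $F$ — acquire a unique parent edge $e_F\in E(\DT)\setminus E(\MST)$, and $F\mapsto e_F$ is a bijection from triangles to non-tree edges with $e_F$ a side of $F$. A short case check (a side of $F$ lying on the convex hull is either in $\MST$ or is $e_F$ itself) shows each of the other two sides of $F$ is either an $\MST$ edge or the parent edge $e_{F'}$ of a child triangle $F'$ of $F$, each possibility occurring exactly once per such side. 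Writing $M(F)$ for the total length of the $\MST$-sides of $F$, the geometric fact then gives
\[
|e_F|\;\le\;\cos\alpha\Bigl(M(F)+\!\!\sum_{F'\text{ child of }F}\!\!|e_{F'}|\Bigr).
\]

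Finally I would sum over all triangles. With $W=\sum_F|e_F| = w(\DT)-w(\MST)$, the double sum satisfies $\sum_F\sum_{F'\text{ child of }F}|e_{F'}|\le\sum_{F'}|e_{F'}| = W$ since every triangle has exactly one parent, and $\sum_F M(F) = \sum_{g\in\MST}|g|\cdot\#\{\text{triangles with side }g\}\le 2\,w(\MST)$ since a Delaunay edge borders at most two triangles. Hence $W\le\cos\alpha\,(2\,w(\MST)+W)$, so $W\le\frac{2\cos\alpha}{1-\cos\alpha}\,w(\MST)$, and therefore $w(\DT)=w(\MST)+W\le\bigl(1+\frac{2\cos\alpha}{1-\cos\alpha}\bigr)w(\MST)=\frac{1+\cos\alpha}{1-\cos\alpha}\,w(\MST)$, matching the claim. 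The main obstacle is the bookkeeping of the third paragraph: one needs each non-tree edge to be attributed to exactly one incident triangle so that the two remaining sides of every triangle are counted once, as either an $\MST$ edge (charged at most twice in total) or a child's parent edge (charged at most once in total); the dual spanning tree is precisely the device that makes this automatic. The geometric inequality is routine once one recalls the projection formula, but recognizing that the factor $\cos\alpha$ rather than $2\cos\alpha$ is available is the step that yields the exact constant $\frac{1+\cos\alpha}{1-\cos\alpha}=1+2\cos\alpha+2\cos^{2}\alpha+\cdots$.
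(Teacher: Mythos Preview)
Your proof is correct and follows essentially the same approach as the paper: both use the dual spanning tree of the non-$\MST$ Delaunay edges rooted at the outer face, apply the triangle inequality $c\le\cos\alpha\,(a+b)$ at each face, and sum to obtain $\sum_i w(e_i)\le\frac{2\cos\alpha}{1-\cos\alpha}\,w(\MST)$. The only cosmetic differences are that the paper packages the summation as an iterative Euler-tour shortcutting argument rather than a direct sum over faces, and that your derivation of $c\le\cos\alpha\,(a+b)$ via the projection formula $c=a\cos B+b\cos A$ is cleaner than the paper's extremal-case justification.
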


\begin{proof}
  The proof follows closely to that of Lemma~3.1 of
  Klein~\cite{Klein06}.  Let $T$ be the $\MST$. (Recall $T \subseteq
  \DT$.)  Consider the dual graph of the plane graph defined by $\DT$
  and refer to Figure~\ref{fig:lem1}.  The edges $\DT \setminus T$,
  viewed in the dual, form a spanning tree $T^*$ of the dual graph.
  Rooting $T^*$ at the vertex corresponding to the outside of the
  $\DT$, we consider any leaf-to-root order of $\DT \setminus T$ with
  respect to $T^*$.  Let $e_1, e_2, \ldots, e_k$ be that ordering.

\begin{figure}[t]
  \centering
  \scalebox{0.8}{\input{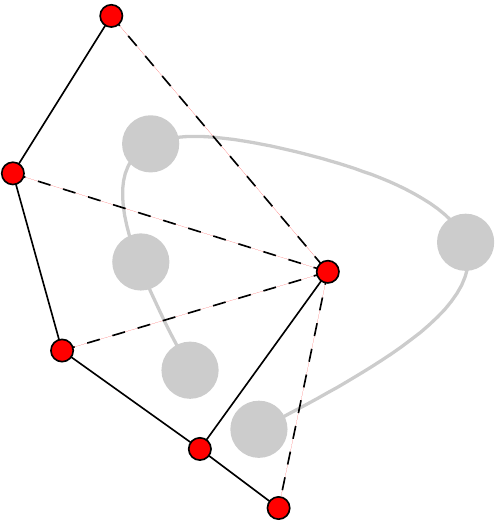_t}}
  \caption{An ordering for the edges in the proof of
    Lemma~\ref{lem:wt-dt}.  The $\MST$ is given by the solid edges.
    The non-$\MST$ edges are dashed and, viewed in the dual of the
    plane graph defined by the $\DT$, form a spanning tree of this
    dual graph. }
  \label{fig:lem1}
\end{figure}

  Let $H_0$ be the non-self-crossing Euler tour of $T$. The first non-$\MST$ edge, $e_1$, makes a
  triangle with edges $a_1$ and $b_1$ of $H_0$.  Recursively define
  $H_i$ as the tour resulting from removing $a_i$ and $b_i$ from
  $H_{i-1}$ and adding $e_i$:
  \[w(H_i) =
  w(H_{i-1})+w(e_i)-w(a_i)-w(b_i)\]

  Since the $\alpha$ is the smallest angle of triangle $e_ia_ib_i$ and $e_i$
  is longest when $w(a_i) = w(b_i)$, we get
  \[w(e_i) \leq (w(a_i)+w(b_i))\cos \alpha.\]

  Combining, we get:

  \[w(H_i) \leq w(H_{i-1})+\left(1-1/\cos \alpha\right)w(e_i).\]
  
  Summing:
  
  \[w(H_k) \leq w(H_0) + \left(1-1/\cos \alpha\right)\sum_i
  w(e_i).\]

  Rearranging gives:
  \[
  \sum_i w(e_i) \leq \frac{w(H_0)-w(H_k)}{\frac{1}{\cos \alpha}-1} \leq \frac{2 \cos \alpha}{1-\cos \alpha}w(\MST)
  \]
  where the last inequality follows from $w(H_k) \ge 0$ and $w(H_0) = 2\, w(\MST)$.  Since $w(DT) = w(\MST)+\sum_i w(e_i)$, we get the lemma.
\end{proof}

\begin{lemma} \label{lem:bounded-coverage} The number of Delaunay
  circumdisks whose interiors contain a given point in the plane is at most
  \begin{equation}
    \label{eq:fe}
    f_e(\alpha) = 2\pi/\alpha.
  \end{equation}
\end{lemma}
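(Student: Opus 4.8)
The plan is to identify the quantity in question---the number of triangles $T$ of $\DT$ whose \emph{open} circumdisk $D_T$ contains $p$, call this set $\mathcal{T}_p$---with a combinatorial invariant of the Delaunay triangulation $\DT'$ of $S\cup\{p\}$, namely the degree of $p$ in it, and then to bound that degree directly from the no-sharp-angle hypothesis via the inscribed angle theorem. Since $\mathcal{T}_p\subseteq\mathcal{T}_{p'}$ for every $p'$ sufficiently close to $p$ (if $p$ lies in the open disk $D_T$ then so does every nearby point), it suffices to treat $p$ in general position: not a site, on no Delaunay edge or circumcircle, and strictly inside $\mathrm{conv}(S)$.

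Next I would invoke the classical picture of Delaunay point insertion. The triangles of $\mathcal{T}_p$ are exactly those that disappear when $p$ is inserted; their union is a \emph{star-shaped} polygon $P_p$ (the ``cavity'') with $p$ in its interior, and $\DT'$ re-triangulates $P_p$ by joining $p$ to every vertex of $P_p$. (For a self-contained proof one may lift $S$ to the paraboloid $z=x^2+y^2$: the triangles of $\mathcal{T}_p$ are precisely the lower-hull faces whose affine hull passes above the lift of $p$, i.e.\ the lower faces visible from that lifted point; the region of a convex surface visible from a point is a topological disk, and a short convexity estimate along any ray out of $p$ shows this region projects onto a set star-shaped about $p$.) Thus if $P_p$ has $h$ vertices it is partitioned into $h-2$ triangles, so $|\mathcal{T}_p|=h-2$, and $p$ gains exactly $h$ Delaunay neighbours $u_1,\dots,u_h$ (cyclically, indices mod $h$) in $\DT'$, forming $h$ triangles $pu_iu_{i+1}$ whose angles at $p$ sum to $2\pi$.

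The heart of the argument is the claim that $\angle u_ipu_{i+1}>\alpha$ for every $i$. The segment $u_iu_{i+1}$ is a boundary edge of $P_p$, hence a side of exactly one triangle $T_i=u_iu_{i+1}z_i$ of $\mathcal{T}_p$, with $z_i$ on the interior side of the line $\ell_i$ spanned by $u_i$ and $u_{i+1}$. Because $P_p$ is star-shaped about $p$ and lies locally on the $z_i$ side of $\ell_i$, the point $p$ must itself lie on the $z_i$ side of $\ell_i$---otherwise the segment from $p$ to a relative-interior point of $u_iu_{i+1}$ would exit $P_p$. So $p$ lies inside the open disk $D_{T_i}$ and on the same side of the chord $u_iu_{i+1}$ as $z_i$; the inscribed angle theorem then gives $\angle u_ipu_{i+1}>\angle u_iz_iu_{i+1}$, and $\angle u_iz_iu_{i+1}$ is an interior angle of the Delaunay triangle $T_i$, hence $\ge\alpha$ by hypothesis. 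Summing over $i$ yields $h\alpha<\sum_i\angle u_ipu_{i+1}=2\pi$, so $h<2\pi/\alpha$ and therefore $|\mathcal{T}_p|=h-2<2\pi/\alpha=f_e(\alpha)$.

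The step I expect to cost the most effort is the first one: justifying rigorously that $\mathcal{T}_p$ coincides with the insertion cavity and, especially, that the cavity is a single star-shaped polygon rather than merely a connected union of triangles---the inscribed-angle estimate relies on $p$ sitting on the correct side of each cavity edge, which is exactly where star-shapedness enters. Everything after that is the brief angle computation above.
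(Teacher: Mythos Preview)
Your proposal is correct and follows essentially the same approach as the paper: both identify the circumdisks containing $p$ with the cavity removed when $p$ is inserted, relate its size to the degree $h$ of $p$ in $\DT'$, and then use the inscribed angle theorem to show each angle $\angle u_ipu_{i+1}$ dominates an angle of an original Delaunay triangle, forcing $h\le 2\pi/\alpha$. Your write-up is in fact a bit more careful than the paper's---you correctly record $|\mathcal{T}_p|=h-2$ rather than $h$, and you make explicit the star-shapedness of the cavity needed to place $p$ on the correct side of each boundary chord---so the extra effort you anticipate for that step is well spent but not strictly required for the bound.
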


\begin{proof}
  The lemma trivially holds for points that are sites.  Let $x$ be
  a non-site point in the plane.  Then the Delaunay triangles
  whose circumcircles contain $x$ are exactly the ones that get
  removed from the Delaunay triangulation if we add $x$ to $S$ and
  re-triangulate. Therefore, the number of Delaunay circumcircles that
  contain $x$ is the same as the degree of $x$ in the Delaunay
  triangulation, $\DT_x$ of $S\cup\{x\}$.

  Let $d$ be the degree of $x$ in $\DT_x$.  Then, one of the
  triangles, $xqr$, in $\DT_x$ incident to $x$ has an angle at $x$ of
  at most $2\pi/d$.  Edge $qr$ must be a side of a triangle $qrs$ in $\DT$ because, after the removal of $x$, line segment $qr$ is still a chord of the empty circle that circumscribed $xqr$.  However since the circumcircle of triangle $qrs$ contains $x$, this circumcircle extends at least as far from the $x$-side of $qr$ as the circumcircle of $xqr$.  Therefore angle $qsr$ is at least as sharp as angle $qxr$. So it must be that $2\pi/d  \ge \alpha$, proving the lemma.
\end{proof}

\begin{figure}[h]
  \centering
  \scalebox{0.8}{\input{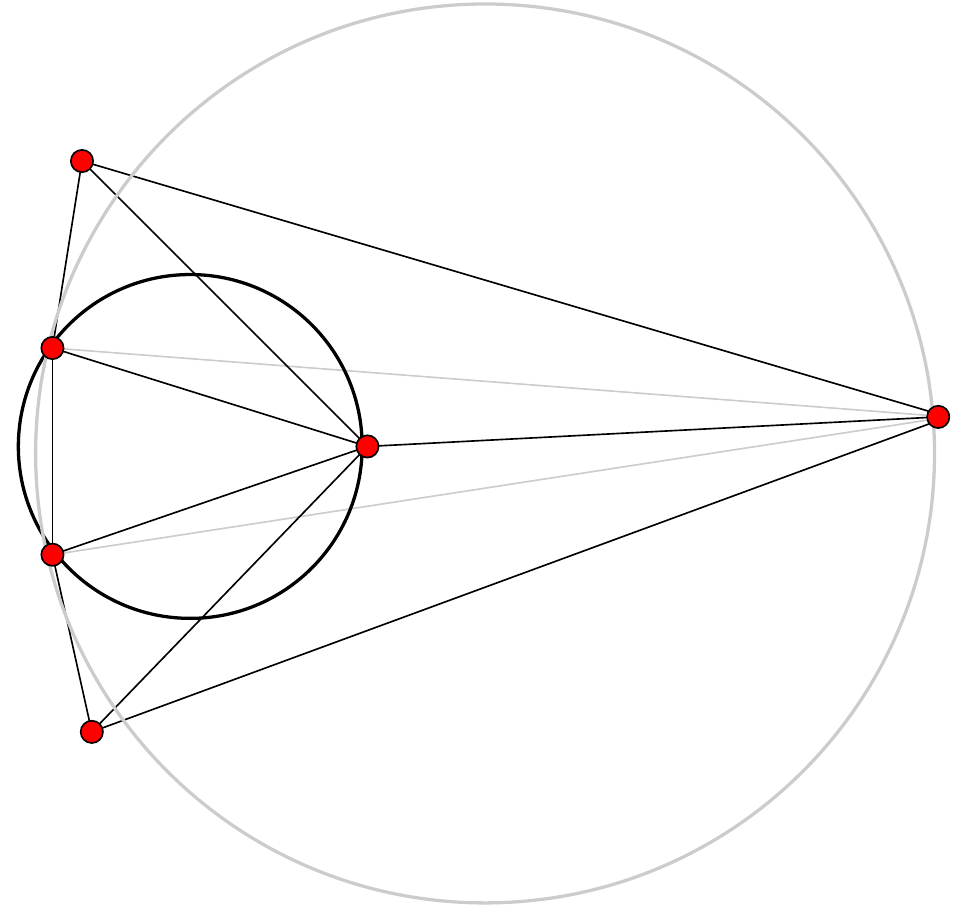_t}}
  \caption{An illustration of the construction in the proof of Lemma~\ref{lem:bounded-coverage}.  $\DT_x$ is given by the dark edges and $\DT$ is given by the convex hull of $\DT_x$ and the gray edges.  The circumcircles of triangles $qrx$ and $qrs$ are illustrated.}
  \label{fig:lem2}
\end{figure}

\section{Portals for chords}

As we now show, it is possible to space a set of \emph{portals} along an edge of a Delaunay triangulation in such a way that any chord of a Delaunay circumcircle must pass close to one of the portals, relative to the chord length.

\begin{figure}[t]
\centering\includegraphics[scale=0.9]{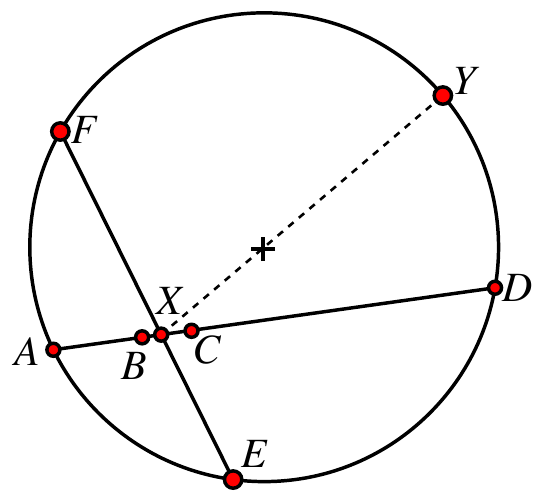}
\caption{Illustration of the statement and proof of Lemma~\ref{lem:portal-spacing}}
\label{fig:xchords}
\end{figure}

\begin{lemma}
\label{lem:portal-spacing}
Let $AD$ be a chord of a circle $O$, let $B$ and $C$ be points interior to segment $AD$, and let $EF$ be another chord of $O$, crossing $AD$ between $B$ and $C$. Then the distance from chord $EF$ to the nearer of the two points $B$ and $C$ is at most 
\[\frac{|EF|\cdot |BC|}{2\min(|AB|,|CD|)}.\]
\end{lemma}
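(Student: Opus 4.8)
Write $P$ for the point at which chord $EF$ crosses chord $AD$; by hypothesis $P$ lies between $B$ and $C$, so (reading the points in the order $A,B,C,D$ along $AD$, as in Figure~\ref{fig:xchords}) the five points occur in the order $A,B,P,C,D$. The plan is to bound the target quantity by combining two essentially independent estimates: first, a lower bound $|EF|\ge 2\min(|AB|,|CD|)$ on the length of the crossing chord, coming from the fact that $P$ cannot be too close to either end of $AD$; and second, the elementary observation that the nearer of $B$ and $C$ to the chord $EF$ is automatically within $\tfrac12|BC|$ of it. Putting these together gives the claimed bound with a factor of two to spare, so in particular there is no need to keep track of the angle at which the two chords meet.

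For the first estimate I would invoke the power of the point $P$ with respect to $O$. Since $B$ and $C$ are interior to $AD$, the point $P$ lies in the open disk bounded by $O$, hence strictly between $E$ and $F$ on the chord $EF$; therefore $|PE|+|PF|=|EF|$, and the arithmetic--geometric mean inequality gives $|PE|\cdot|PF|\le|EF|^2/4$. The intersecting-chords relation $|PA|\cdot|PD|=|PE|\cdot|PF|$ then yields $|PA|\cdot|PD|\le|EF|^2/4$. Because $B$ lies between $A$ and $P$, and $C$ between $P$ and $D$, we have $|PA|=|AB|+|BP|\ge|AB|$ and $|PD|=|CD|+|CP|\ge|CD|$, so $|PA|\cdot|PD|\ge|AB|\cdot|CD|\ge\min(|AB|,|CD|)^2$. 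Combining the two inequalities gives $|EF|\ge 2\min(|AB|,|CD|)$.

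For the second estimate, since $P$ is a point of the segment $EF$, the distance from $B$ to the chord $EF$ is at most $|BP|$ and the distance from $C$ to $EF$ is at most $|CP|$; as $P$ lies between $B$ and $C$ we have $\min(|BP|,|CP|)\le\tfrac12(|BP|+|CP|)=\tfrac12|BC|$, so the distance from $EF$ to the nearer of $B$ and $C$ is at most $\tfrac12|BC|$. Finally, using the first estimate in the form $\min(|AB|,|CD|)\le\tfrac12|EF|$, the nearer of $B$ and $C$ is within
\[\tfrac12|BC|\le\frac{|EF|\cdot|BC|}{4\min(|AB|,|CD|)}\le\frac{|EF|\cdot|BC|}{2\min(|AB|,|CD|)}\]
of the chord, which is the claimed bound. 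The only step with any real content is the application of the power of the point; everything else is routine, and the main thing to be careful about is simply that all the relevant points lie on a common line in the stated order, so that the length identities $|PA|=|AB|+|BP|$ and $|PD|=|CD|+|CP|$ are valid.
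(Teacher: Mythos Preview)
Your proof is correct. Both you and the paper begin with the same ``second estimate'': since the crossing point lies between $B$ and $C$, the nearer of the two is within $|BC|/2$ of the chord $EF$, and it then suffices to bound $|EF|$ below by a multiple of $\min(|AB|,|CD|)$. The paper does this geometrically: assuming without loss of generality that $F$ lies on the arc of $O$ on the same side of $AD$ as the center, it introduces the point $Y$ of $O$ farthest from the crossing point $X$, splits into the cases $F\in\text{arc}(A,Y)$ and $F\in\text{arc}(Y,D)$, and uses monotonicity of the distance $|FX|$ along each sub-arc to obtain $|EF|\ge|FX|\ge\min(|AX|,|DX|)\ge\min(|AB|,|CD|)$. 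Your route via the intersecting-chords relation $|PA|\cdot|PD|=|PE|\cdot|PF|$ together with AM--GM is different and arguably cleaner: it avoids the normalization and the case split, and it actually yields the sharper inequality $|EF|\ge 2\min(|AB|,|CD|)$, improving the stated bound by a factor of~$2$. The paper's argument has the virtue of being entirely synthetic and tied to the picture in Figure~\ref{fig:xchords}; but since the lemma is only used qualitatively downstream (feeding into the $O(\cdot)$ bounds of Lemma~\ref{lem:portals}), the extra constant your argument wins is not needed.
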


\begin{proof}
The points of the lemma are illustrated in Figure~\ref{fig:xchords}. We assume without loss of generality that $F$ is on the side of $AD$ that contains the center of $O$, as drawn in the figure; let $Y$ be the point of $O$ farthest from $X$, lying on the line through $X$ and the center of $O$.
Note that the distance from line $EF$ to the closer of $B$ and $C$ is at most $\min(|BX|,|CX|)\le |BC|/2$, so it remains to prove that $|EF|\ge\min(|AB|,|CD|)$.
But if $F$ lies on the arc between $A$ and $Y$, then $|EF|\ge |FX|\ge |AB|$, and if $F$ lies on the arc between $Y$ and $D$ then $|EF|\ge |FX|\ge |CD|$. In either case the result follows.
\end{proof}

\begin{lemma}
\label{lem:portals}
Let $s$ be a line segment in the plane, and let $\epsilon>0$. Then there exists a set $P_{s,\epsilon}$ of $O(\frac{1}{\epsilon}\log\frac{1}{\epsilon})$ points on $s$ with the property that, for every circle $O$ for which $s$ is a chord, and for every chord $t$ of $O$ that crosses $s$, $t$ passes within distance $\epsilon|t|$ of a point in $P_{s,\epsilon}$.
\end{lemma}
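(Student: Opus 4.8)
The plan is to build $P_{s,\epsilon}$ recursively by repeatedly subdividing $s$, using Lemma~\ref{lem:portal-spacing} to control the error, and a geometric-series argument to bound the total number of portals.

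First I would set up the recursion. Write $s = AD$. I place portals at a geometrically-spaced sequence of points accumulating toward each endpoint of $s$: for $i = 0, 1, 2, \dots$, a point at distance $2^{-i}|s|$ from $A$, and symmetrically from $D$. Between two consecutive such points — say between distances $2^{-i}|s|$ and $2^{-(i+1)}|s|$ from $A$, a subsegment of length $2^{-(i+1)}|s|$ — I recursively place $\Theta(\log\frac1\epsilon)$ evenly spaced portals, at spacing roughly $\epsilon \cdot 2^{-(i+1)}|s|$. Actually, it is cleaner to phrase the whole construction as a single geometric hierarchy: scatter points on $s$ so that, at "scale" $\ell$ (meaning: within distance $\ell$ of either endpoint but not within distance $\ell/2$), the portals are spaced $\Theta(\epsilon\ell)$ apart. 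Summing the geometric series of level sizes $\sum_i \Theta(1/\epsilon)$ over the $O(\log\frac1\epsilon)$ levels needed before the spacing itself drops below $\epsilon|s|/\mathrm{poly}$ and the whole remaining central interval can be handled at once gives the claimed bound $O(\frac1\epsilon\log\frac1\epsilon)$.

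Next I would verify the covering property. Let $O$ be any circle having $s = AD$ as a chord, and let $t = EF$ be a chord of $O$ crossing $s$ at a point $X$ in the interior of $s$. I want a portal within $\epsilon|t|$ of $t$. Let $B$ and $C$ be the two portals of $P_{s,\epsilon}$ immediately surrounding $X$ (the nearest portal on each side); then $X \in BC$, and by Lemma~\ref{lem:portal-spacing} the distance from $t$ to the nearer of $B$, $C$ is at most $\frac{|EF|\cdot|BC|}{2\min(|AB|,|CD|)}$. So it suffices to guarantee that the spacing $|BC|$ of consecutive portals near $X$ is at most $2\epsilon\min(|AB|,|CD|)$. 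This is exactly what the geometric construction delivers: if $X$ lies at scale $\ell$ (distance between $\ell/2$ and $\ell$ from, say, $A$, so $|AB| \ge \ell/2 - |BC| = \Theta(\ell)$, and $|CD| \ge |AB|$ up to symmetry since $X$ is closer to $A$), then the local portal spacing $|BC| = \Theta(\epsilon\ell) \le 2\epsilon\min(|AB|,|CD|)$ once constants are chosen correctly; a symmetric argument applies when $X$ is closer to $D$, and the central interval is handled by a uniformly fine grid. One has to be slightly careful at the very coarsest scale, where $X$ is near the middle of $s$: there $\min(|AB|,|CD|) = \Theta(|s|)$, so portals at spacing $\Theta(\epsilon|s|)$ suffice, consistent with the top level of the hierarchy.

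The main obstacle is getting the bookkeeping of the geometric hierarchy to close: I must choose the per-level spacing, the number of levels, and the handling of the central "leftover" interval so that (a) every crossing point $X$ — no matter how close to an endpoint — is straddled by two portals at spacing $\le 2\epsilon\min(|AB|,|CD|)$, and (b) the total count telescopes to $O(\frac1\epsilon\log\frac1\epsilon)$ rather than, say, blowing up from infinitely many scales. The resolution is that once a subsegment near an endpoint has length below about $\epsilon|s|$, its own length is already $\le \epsilon \cdot (\text{distance from that subsegment to the far endpoint})$, so $O(1/\epsilon)$ further portals — indeed $O(1)$ beyond what we have — cover all crossings there and the recursion bottoms out after $O(\log\frac1\epsilon)$ levels; everything else is a routine geometric-sum estimate.
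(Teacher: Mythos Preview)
Your hierarchical construction and your use of Lemma~\ref{lem:portal-spacing} for crossings away from the endpoints are fine and match the paper's approach (the paper spaces portals in a $(1+2\epsilon)$-geometric progression from each endpoint, which is equivalent to your dyadic scales refined by $\Theta(1/\epsilon)$ points per scale). The problem is your termination step. You claim that once the remaining subsegment near an endpoint $A$ has length about $\epsilon|s|$, ``its own length is already $\le\epsilon\cdot(\text{distance to the far endpoint})$,'' and that $O(1)$ further portals then suffice. But the bound in Lemma~\ref{lem:portal-spacing} is $\tfrac{|t|\cdot|BC|}{2\min(|AB|,|CD|)}$, and for a crossing close to $A$ the controlling denominator is $|AB|$, not the distance $|CD|$ to the far endpoint. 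As the crossing approaches $A$, $|AB|\to 0$, so the required spacing $|BC|\le 2\epsilon|AB|$ also tends to $0$; no finite portal set can achieve this via Lemma~\ref{lem:portal-spacing} alone (the implied portal count behaves like $\int_0 d\rho/(\epsilon\rho)$, which diverges).

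The paper supplies the missing ingredient with a direct estimate that bypasses Lemma~\ref{lem:portal-spacing} near the endpoints. It places the first interior portal $p_1$ at distance $\Theta(\epsilon^2|s|)$ from the endpoint $p_0$, and for a crossing $X$ between $p_0$ and $p_1$ argues as follows: by the power of the point $X$ with respect to $O$, $|XE|\cdot|XF|=|Xp_0|\cdot|XD|$, hence $|t|=|XE|+|XF|\ge 2\sqrt{|Xp_0|\cdot|XD|}$; since the distance from $t$ to the portal $p_0$ is at most $|Xp_0|$, the ratio is at most $\tfrac12\sqrt{|Xp_0|/|XD|}=O(\sqrt{\epsilon^2})=O(\epsilon)$. (The paper phrases this via the Pythagorean theorem and the circle's radius $r\ge|s|/2$, obtaining $|t|\ge r\sqrt{2\delta-\delta^2}$ when $t$ passes within $\delta r$ of $p_0$.) With this endpoint estimate in hand, your hierarchy does bottom out after $O(\log\tfrac1\epsilon)$ levels and the rest of your argument goes through; but as written, the proposal is missing precisely this step.
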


\begin{proof}
Our set $P_{s,\epsilon}$ includes both endpoints of $s$ and its midpoint.  Refer to Figure~\ref{fig:lem4}.
In the subset of $s$ from one endpoint $p_0$ to the midpoint $m$, we add a sequence of
points $p_i$, where $p_1$ is at distance $O(\epsilon^2 s)$ from $p_0$ with a constant of proportionality to be determined later and
where for each $i>1$, $p_i$ is at distance $2\epsilon\, d(p_0,p_{i-1})$ from $p_{i-1}$.
Because the distance from $p_0$ increases by a $(1+\epsilon)$ factor at each step, the set formed in this way contains $O(\frac{1}{\epsilon}\log\frac{1}{\epsilon})$ points.

If chord $t$ crosses $s$ between some two points $p_i$ and $p_{i+1}$ for $i\ge 1$, or between the last of these points and the midpoint of $s$, then by Lemma~\ref{lem:portal-spacing} the distance from $t$ to the nearer of these two points is at most:
\[
\frac{|t|\cdot|p_ip_{i+1}|}{2|p_0p_i|} = \frac{2 \epsilon |t|\cdot|p_0p_i|}{2|p_0p_i|} = \epsilon |t|.
\]

Otherwise, $t$ crosses $s$ between $p_0$ and $p_1$. Let $r$ be the radius of $O$, necessarily at least $|s|/2$, and suppose that $t$ passes within distance $\delta r$ of $p_0$. Because of the choice of $p_1$, $\delta=O(\epsilon^2)$.
Applying the Pythagorean Theorem to the shaded triangle in Figure~\ref{fig:lem4} we get $|t|\ge r\sqrt{2\delta-\delta^2}=\Omega(r\sqrt{\delta})$.
Combining this with the definition of $\delta$ shows that $t$ is within distance $O(\sqrt{\delta}|t|)=O(\epsilon|t|)$ of $p_0$. By choosing the constant of proportionality for the placement of $p_1$ appropriately we can ensure that this distance is at most $\epsilon|t|$.
\end{proof}

\begin{figure}[h]
  \centering
  \scalebox{0.8}{\input{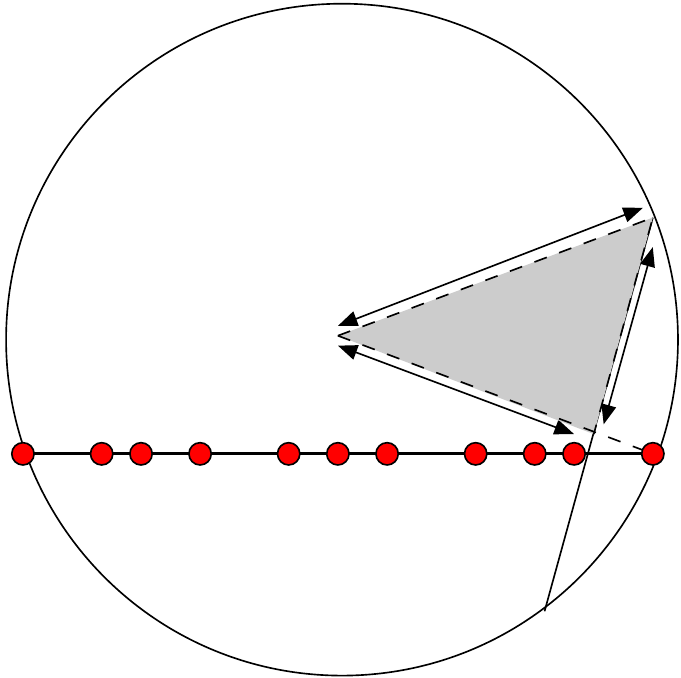_t}}
  \caption{The set of points $P_{s,\epsilon}$ along a chord $s$
    (horizontal line) of a circle $O$.  An illustration of the final
    case of the proof of Lemma~\ref{lem:portals}.}
  \label{fig:lem4}
\end{figure}

We call the points in $P_{s,\epsilon}$ \emph{portals}.

\section{Spanning the portals within each triangle}

Within each triangle of the Delaunay triangulation, we will use a plane Steiner spanner that connects the portals that lie on the triangle edges. For this special case, we use a construction that generalizes to an arbitrary set $P$ of points on the boundary of an arbitrary planar convex set $K$.  Consider a line $L$ and that makes an angle $\theta$ with the vertical.  For $\delta \in [0,\pi/2)$, 
we say that a path is \emph{$(\theta\pm\delta)$-angle-bounded} if it is piecewise linear and the smallest angle between each linear segment and $L$ is at most $\delta$. 
We say that a point $p$ on the boundary of $K$ is \emph{$(\theta\pm\delta)$-extreme} if all rays starting at $p$ and making an angle at most $\delta$ with $L$ are external to $K$.

\begin{lemma}
\label{lem:ab-short}
Every $(\theta\pm\delta)$-angle-bounded path has length $1+O(\delta^2)$ times the distance between its endpoints.
\end{lemma}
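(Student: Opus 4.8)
The plan is to reduce this to a one–dimensional projection estimate; the convex set $K$ and the parameter $\theta$ play no role in this lemma. First I would rotate the plane so that $L$ becomes the $x$-axis, which changes no lengths, and orient the path from one endpoint to the other, letting $v_1,v_2,\dots,v_k$ be the edge vectors in order, so that the displacement between the endpoints is $\sum_i v_i$. Reading the definition of $(\theta\pm\delta)$-angle-boundedness with $L$ oriented, each $v_i$ then makes angle at most $\delta$ with the positive $x$-axis; writing $v_i=(x_i,y_i)$ this says $x_i\ge 0$ and $|y_i|\le x_i\tan\delta$. The key structural point is that \emph{every} edge has a nonnegative $x$-component, so the projection of the path onto $L$ is monotone and the horizontal contributions add up without cancellation.

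Granting this, the two sides are bounded directly. For the length, $|v_i|=\sqrt{x_i^2+y_i^2}\le x_i\sqrt{1+\tan^2\delta}=x_i\sec\delta$, so the total length $\sum_i|v_i|$ is at most $\sec\delta\sum_i x_i$. For the endpoint distance, $\bigl|\sum_i v_i\bigr|=\sqrt{(\sum_i x_i)^2+(\sum_i y_i)^2}\ge\sum_i x_i$, using $x_i\ge 0$. Combining, the path length is at most $\sec\delta$ times the distance between its endpoints, and since $\delta\in[0,\pi/2)$ we have $\sec\delta=1/\cos\delta=1+\tfrac12\delta^2+O(\delta^4)=1+O(\delta^2)$, which is the claimed bound.

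The computation is entirely routine; the only step that requires care — and the one I would flag as the real content — is making sure the angle hypothesis is used in the form "each edge vector lies in the fixed cone of half-angle $\delta$ about the oriented line $L$" (equivalently, that traversing the path the edges never reverse direction along $L$). This is what forces the projection onto $L$ to be monotone; without it a zigzag of nearly-$L$-parallel segments could be angle-bounded while being arbitrarily longer than the straight segment joining its endpoints, so the conclusion would fail. With the oriented reading, monotonicity is automatic and the estimate above goes through unchanged.
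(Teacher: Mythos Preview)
Your argument is correct and arrives at the same $1/\cos\delta = 1+O(\delta^2)$ bound as the paper; the paper's proof simply names the isosceles triangle on the two endpoints as the extremal configuration and reads off that ratio, which is the geometric paraphrase of your projection estimate (each segment has length at most $\sec\delta$ times its $L$-projection, and the projections sum to at most the endpoint distance). Your explicit flagging of the oriented reading of the angle hypothesis is a point the paper leaves implicit.
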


\begin{proof}
The most extreme case is a path that follows two sides of an isosceles triangle having the endpoints of the path as base, for which the length is the length of the base multiplied by $1/\cos\delta=1+O(\delta^2)$.
\end{proof}

\begin{lemma}
\label{lem:wedged}
Let $P$ be a set of $n$ points on the boundary of a convex set $K$ with perimeter $\ell$, let $\theta$ be an angle and let $\delta \in [0,\pi/2)$.
Then in time $O(n\log n)$ we can construct a set $S$ of $O(n)$ line segments within $K$, with total length $O((\ell \log n)/\delta)$,
with the property that for every point $p$ in $P$ there exists a $(\theta\pm\delta)$-angle-bounded path in $S$ from $p$ to a $(\theta\pm\delta)$-extreme point of~$K$.
\end{lemma}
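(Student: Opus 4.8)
The plan is to begin by rotating the plane so that $L$ is vertical; then a $(\theta\pm\delta)$-angle-bounded path is one all of whose segments make an angle at most $\delta$ with the vertical, and the $(\theta\pm\delta)$-extreme points are those from which a whole cone of nearly-vertical directions escapes $K$. By convexity there are only a few of these, lying in a small number of short arcs of $\partial K$ near the points where $K$ is extreme in the $L$-direction (for example, at vertices when $K$ is a polygon). I would group the points of $P$ so that all points in a group can be funneled to one common such extreme point $e$ (handling a constant number of groups independently), and from now on assume a single group with target $e$. The output $S$ will be a ``merge forest'': each point of $P$ gets an angle-bounded path in $S$ to a rendezvous point, rendezvous points get merged in pairs, and the final rendezvous point is joined to $e$ by one more near-vertical segment.

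Concretely, I would sort $P$ by $x$-coordinate in time $O(n\log n)$ and build a balanced binary tree over the sorted order. Each leaf is its own rendezvous point. At an internal node, given the rendezvous points $q_\ell,q_r$ returned by the two children, I introduce a new rendezvous point $q$ and join it to $q_\ell$ and to $q_r$ by segments lying inside $K$, each within $\delta$ of vertical; the height of $q$ is chosen so that this is geometrically possible. The recurrence $T(m)=2T(m/2)+O(1)$ for the number of segments solves to $O(m)$, so $|S|=O(n)$, and the total running time is $O(n\log n)$ for the sort plus $O(n)$ for the rest. Every point of $P$ then has an angle-bounded path in $S$ to $e$, namely its leaf-to-root path through the tree followed by the final segment.

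For the length bound I would use two facts. First, an angle-bounded path that moves horizontal distance $w$ must have length at least $w\cot\delta$ and can be realized with length $O(w/\sin\delta)=O(w/\delta)$ (using $\sin\delta=\Omega(\delta)$ on $[0,\pi/2)$); hence the total length of $S$ is $O(1/\delta)$ times the total horizontal distance spanned by its segments. Second, at any fixed depth of the recursion the tree nodes correspond to pairwise disjoint ranges of $x$-coordinates, so the horizontal distances spanned by all merging segments at that depth sum to at most the horizontal extent of $K$, which is at most the perimeter $\ell$. Combining the two facts, each of the $O(\log n)$ depths contributes $O(\ell/\delta)$ to the total, and the single final segment to $e$ adds only $O(\ell/\delta)$ more, giving $O((\ell\log n)/\delta)$ overall.

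I expect the crux to be the ``can be realized inside $K$'' part of the first fact: a merge that needs to move horizontal distance $w$ requires vertical clearance about $w\cot\delta$, and such clearance is scarcest near $e$, which is exactly where the recursion terminates. Making this work forces a careful interplay between how the tree is balanced (a count-balanced split keeps the depth at $\log n$ but can leave a deep node still spanning nearly the full horizontal extent, whereas an $x$-range-balanced split controls clearances but may be unbalanced in count) and where the rendezvous points are placed. The key tool is convexity of $K$: an arc of $\partial K$ whose tangent stays within $\delta$ of vertical has horizontal extent at most $\tan\delta$ times its vertical extent, which is what lets one bound the clearance each merge needs by the vertical room actually available as the funnel approaches $e$. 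Once this placement is established, checking the $\delta$ angle bound on each introduced segment, and that the last segment indeed lands at an extreme point, is routine.
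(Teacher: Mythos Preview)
Your approach is structurally parallel to the paper's---both use a balanced binary recursion of depth $O(\log n)$ and charge each level $O(\ell/\delta)$ of length---but the geometric realization is genuinely different, and the difference matters. The paper does not introduce rendezvous points at all. From each non-extreme point $p$ it shoots two \emph{rays} at the exact angles $\theta-\delta$ and $\theta+\delta$, stopping each ray when it first hits a previously drawn segment (or, for the first two points, the boundary of $K$); the processing order is median-first along the non-extreme arc of $\partial K$. A $(\theta\pm\delta)$-angle-bounded path from $p$ then cascades: follow one of $p$'s rays to where it was stopped, then follow the obstacle segment it hit, and so on until reaching~$\partial K$.

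This ray-shooting is precisely what dissolves the issue you correctly flag as the crux. Because every segment is a ray clipped at the first obstacle, it lies in $K$ by construction, and its length is bounded by a direct trigonometric argument: the stopping obstacle is a ray from the most recently processed neighbor $p'$ on the same side, and a right triangle with angle $\delta$ shows the new segment has length at most $|pr|/\sin\delta$, where $|pr|$ is dominated by the boundary arc from $p$ to $p'$. Your merge tree has no such automatic containment. The rendezvous for two children at horizontal distance $w$ must sit at height at least $(w/2)\cot\delta$ above both, and this accumulates along the path to the root; for a short, wide $K$---a perfectly valid input to the lemma as stated---the root rendezvous lies outside $K$, and no placement inside $K$ can make both incoming segments $\delta$-close to vertical. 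The convexity fact you cite (a near-vertical boundary arc has small horizontal extent) constrains how $\partial K$ behaves near the extreme point, but it does not manufacture vertical room in the body of $K$ where your merges happen.

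There is also a secondary problem with your length accounting. The two merge segments at a node have total length $O(w/\delta)$ only when the two children's rendezvous points sit at comparable heights. When the subtrees have very different horizontal extents (which a count-balanced tree does not prevent), the children's heights differ by roughly that extent times $\cot\delta$, and one merge segment must span that vertical gap in addition to its share of $w$; your per-level charge to horizontal extent alone does not pay for this. The paper's charge is directly to boundary arc length rather than to horizontal projection, so this issue never arises.
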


\begin{proof}
We consider the points of $P$ in an order we will later define; for each such point $p$ that is not itself $(\theta\pm\delta)$-extreme, we extend two line segments with angles $\theta-\delta$ and $\theta+\delta$ until reaching either an extreme point of $K$ or one of the previously constructed line segments.
Thus, a $(\theta\pm\delta)$-angle-bounded path from $p$ may be found by following either of these two line segments, and continuing to follow each line segment hit in turn by the previous line segment on the path, until reaching an extreme point.

The non-extreme points of $P$, because $K$ is convex, form a contiguous sequence along the boundary of $K$. We extend segments from the two endpoints of this sequence, then from its median, and then finally we continue recursively in the two subsequences to the left and right of the median, as shown in Figure~\ref{fig:wedged}.

\begin{figure}[t]
\includegraphics[width=\linewidth]{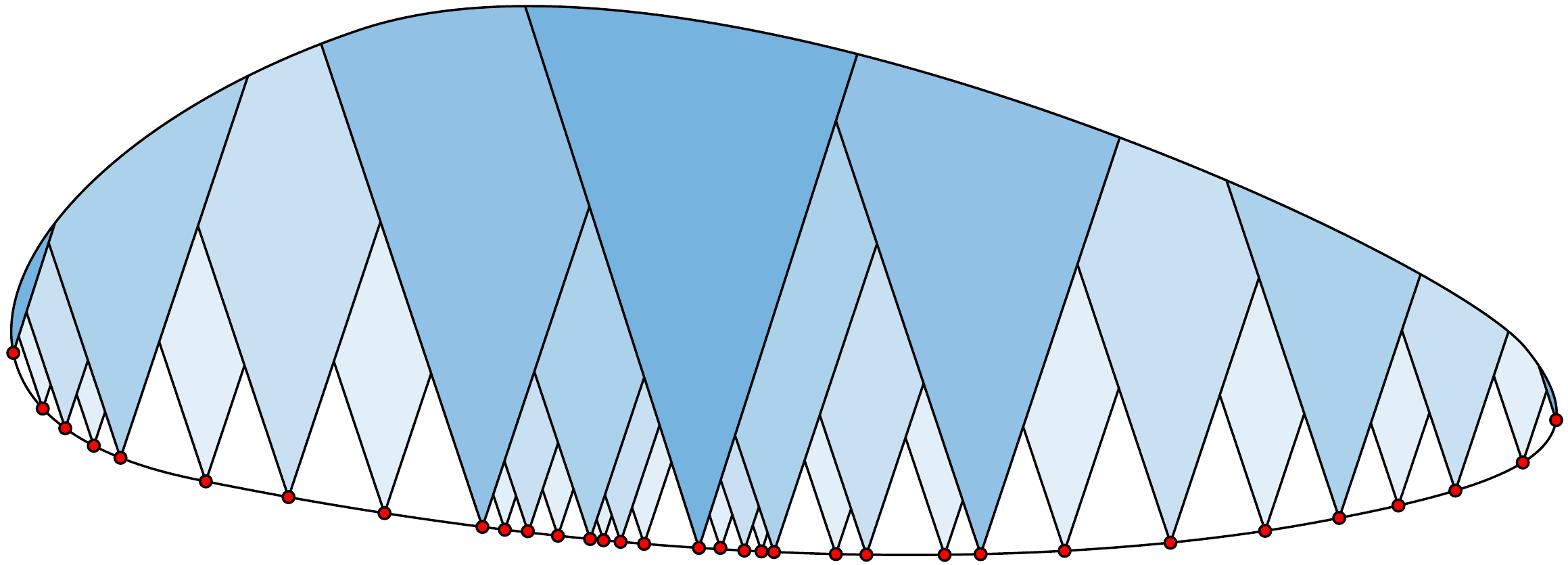}
\caption{An illustration of the construction for Lemma~\ref{lem:wedged} with $\theta = 0$.}
\label{fig:wedged}
\end{figure}

The segments from the first two points of $P$ contribute a total length at most $\ell$ to $S$. Consider the ray  $s$ extended at an angle of, w.l.o.g., $\theta+\delta$ from $p$.  Let $t$ be the ray extended at an angle of $\theta-\delta$ from the most recently considered point $p'$ counter-clockwise along the boundary of $K$ from $p$.  Consider the right triangle one of whose corners $c$ is the intersection of $t$ and $s$, another of whose corners is $p$ and makes an angle $\delta$ at $c$.  (This triangle is shaded in Figure~\ref{fig:lem7}.) Let $r$ be the right angle in this triangle.  Then $|pc| = |pr|/\sin\delta = O(|pr|/\delta)$. Since a subsection of $pc$ is added to $S$ and the (shorter) boundary of $K$ from $p$ to $p'$ is at least as long as $pr$, the length of each added segment for point $p$ is at most proportional to $1/\delta$ times the length of the part of the boundary of $K$ that extends from $p$ to the most recently previously considered point in the same direction.    Because of the ordering of the points, each point along the boundary is charged in this way for $O(\log n)$ segments, so
adding this quantity over all points, the total length of the segments is $O((\ell \log n)/\delta)$ as claimed. We may construct $S$ in $O(n\log n)$ time by using binary search to determine the endpoint on $K$ of each segment.
\end{proof}

\begin{figure}[h]
  \centering
  \scalebox{0.8}{\input{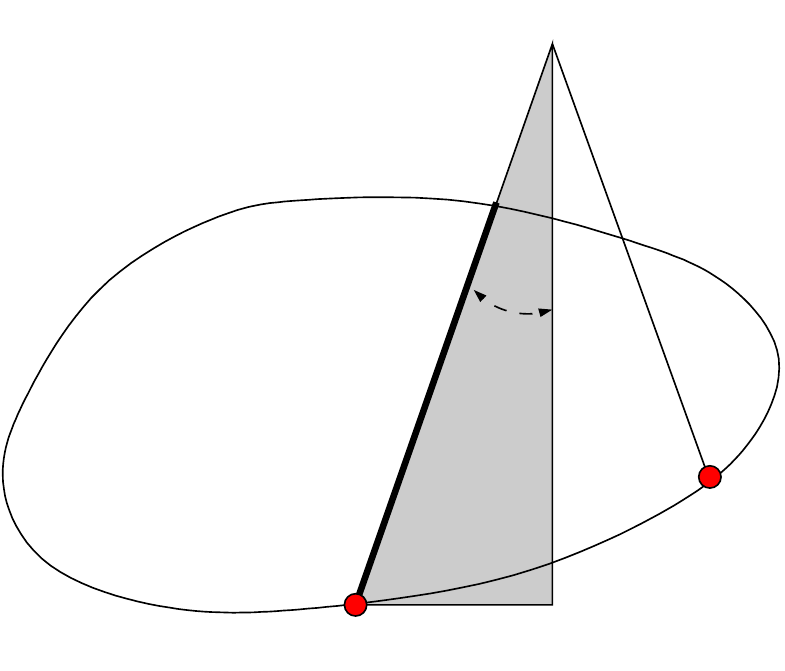_t}}
  \caption{An illustration for proof of Lemma~\ref{lem:wedged}.}
  \label{fig:lem7}
\end{figure}

\begin{lemma}
\label{lem:portal-spanner}
Let $P$ be a set of $n$ points on the boundary of a convex set $K$ with perimeter $\ell$,
and let $\epsilon>0$ be a positive number. Then in time $O(n^2/\epsilon)$ we can construct a plane Steiner $(1+\epsilon)$-spanner for $P$, with all spanner edges in $K$, with $O(n^2/\epsilon)$ edges and vertices, and with total length $O((\ell\log n)/\epsilon)$.
\end{lemma}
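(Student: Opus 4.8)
The plan is to build $G$ as the planar arrangement of $m=\Theta(1/\sqrt\epsilon)$ families of segments, one per direction, each family produced by a single call to Lemma~\ref{lem:wedged}. Fix $\delta=c\sqrt\epsilon$ for a suitably small constant $c$, and pick directions $\theta_1,\dots,\theta_m$ spaced less than $\delta$ apart so that every direction of the \emph{full} circle (not merely of a half-circle) is within $\delta$ of some $\theta_i$; the full circle is needed because the angle-bounded paths produced by Lemma~\ref{lem:wedged} leave each point of $P$ one-sidedly. For each $i$, apply Lemma~\ref{lem:wedged} to $P$ inside $K$ with angle $\theta_i$ and tolerance $\delta$, obtaining a set $S_i$ of $O(n)$ segments inside $K$ of total length $O((\ell\log n)/\delta)$; let $S=\bigcup_i S_i$ and let $G$ be the plane graph obtained by planarizing $S$, introducing a vertex at every crossing.

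\textbf{The quantitative bounds} are routine. The total length is $w(S)=m\cdot O((\ell\log n)/\delta)=O((\ell\log n)/\epsilon)$, and planarization preserves total length. For the size, within a single family $S_i$ no two segments cross (in Lemma~\ref{lem:wedged} each newly extended segment stops at the first previously built segment it meets), so every crossing of $S$ involves two distinct families; there are $\binom m2=O(1/\epsilon)$ pairs of families and at most $O(n^2)$ crossings per pair, so $G$ has $O(n^2/\epsilon)$ crossing vertices and hence $O(n^2/\epsilon)$ vertices and edges in all. For the running time, the $m$ calls to Lemma~\ref{lem:wedged} cost $O(mn\log n)=O(n^2/\epsilon)$, and planarizing $|S|=O(n/\sqrt\epsilon)$ segments with $O(n^2/\epsilon)$ crossings can be done in $O(|S|\log|S|+\#\mathrm{crossings})=O(n^2/\epsilon)$ time with an optimal segment-arrangement algorithm.

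\textbf{The dilation bound} is the heart of the proof. Fix $p,q\in P$ and choose coordinates in which $p$ and $q$ lie on the $x$-axis with $p$ left of $q$. Pick $\theta^{+}$ within $\delta$ of the direction from $p$ to $q$ and $\theta^{-}$ within $\delta$ of the direction from $q$ to $p$; note that $p$ is not $(\theta^{+}\pm\delta)$-extreme (the ray from $p$ toward $q$ makes angle $\le\delta$ with that reference line and hits $q\in K$) and likewise $q$ is not $(\theta^{-}\pm\delta)$-extreme, so Lemma~\ref{lem:wedged} gives a $(\theta^{+}\pm\delta)$-angle-bounded path $\pi_p$ from $p$ to an extreme point $e_p$, and a $(\theta^{-}\pm\delta)$-angle-bounded path $\pi_q$ from $q$ to an extreme point $e_q$, and we may choose which of the two segments of each family the path starts along at $p$ and at $q$. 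Since the segments of Lemma~\ref{lem:wedged} are directed into the forward half-plane of their reference line, and that line is within $\delta$ of the $x$-axis, $\pi_p$ is $x$-monotone increasing and $\pi_q$ is $x$-monotone decreasing. One then shows — this is the technical crux — that for an appropriate choice of starting segments $\pi_p$ and $\pi_q$ must cross at a point $w$ whose $x$-coordinate lies strictly between $x_p$ and $x_q$ and whose height is $O(\delta\,|pq|)$: intuitively $\pi_p$ starts at $p$ on the left boundary and sweeps rightward while $\pi_q$ passes through $q$ on the right boundary and sweeps leftward, both confined to a thin horizontal slab, so an intermediate-value argument on their signed heights forces a crossing inside the strip. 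Since $w$ is an arrangement vertex common to $\pi_p$ and $\pi_q$, concatenating the part of $\pi_p$ from $p$ to $w$ with the part of $\pi_q$ from $w$ to $q$ gives a $p$--$q$ path in $G$; by Lemma~\ref{lem:ab-short} each part has length at most $1+O(\delta^2)$ times the straight distance it spans, and since the foot of $w$ lies between $p$ and $q$ at height $O(\delta|pq|)$ we get $|pw|+|wq|\le(1+O(\delta^2))|pq|$, for a total of $(1+O(\delta^2))|pq|=(1+O(\epsilon))|pq|$; choosing $c$ small enough makes this $(1+\epsilon)|pq|$.

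\textbf{Main obstacle.} The step needing the most care is exactly the placement of the crossing $w$: if $w$ were allowed to land very close to $p$ (or to $q$), one of the two concatenated subpaths would take a detour of length $\Theta(\delta|pq|)$ to achieve a net displacement far smaller than $|pq|$, yielding only dilation $1+O(\delta)=1+O(\sqrt\epsilon)$ rather than $1+O(\epsilon)$. Ruling this out — via the right choice of starting segments at $p$ and $q$, exploiting that both paths are $x$-monotone and pinned to $\partial K$ at their own endpoints, or else by splicing onto a portion of the other path where it already passes within $O(\epsilon)|pq|$ of the far endpoint — is where a genuine case analysis seems unavoidable. (A secondary subtlety is that a very flat or very round $K$ may yield families $S_i$ with few extreme points; handling this — for instance by working with a bounded-complexity convex approximation of $K$ from inside, or by letting paths end at points of $P$ — does not affect the claimed bounds.)
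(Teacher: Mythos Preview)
Your construction and the size/weight/time bounds match the paper's proof exactly. The divergence is in the dilation analysis, and the ``main obstacle'' you flag is in fact not an obstacle; the paper dissolves it with two small observations.

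First, the paper does \emph{not} choose $\theta^{+}$ merely within $\delta$ of the direction $\theta$ of $pq$; it deliberately biases it to one side, taking $\theta_i$ with $\theta+\delta\le\theta_i\le\theta+3\delta$. Every segment of $\pi_p$ then has direction in $[\theta,\theta+4\delta]$, so $\pi_p$ stays weakly on one side of the line $pq$ and its terminal extreme point $p'$ lies on the boundary arc past $q$ (``clockwise of $q$''). Symmetrically, the path $\pi_q$ is chosen so that $q'$ lies on the other arc past $p$. The four boundary points $p,p',q,q'$ therefore interleave along $\partial K$, and since both paths lie inside the convex set $K$, they must cross. This replaces your intermediate-value sketch with a one-line topological argument.

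Second, and more to the point of your worry: once the paths cross at $x$, apply Lemma~\ref{lem:ab-short} \emph{once to the entire concatenated walk} $p\to x\to q$, not to the two halves separately. Every edge of $\pi_p$, and every edge of $\pi_q$ traversed in reverse, makes angle at most $4\delta$ with direction $\theta$; moreover each such edge has positive $\theta$-component, so the walk is $\theta$-monotone from $p$ to $q$. Projecting onto the line $pq$ gives total length at most $|pq|/\cos(4\delta)=(1+O(\delta^2))|pq|$, with no reference to the position of $x$ or to $|pw|+|wq|$. The bad scenario you describe---$w$ at height $\Theta(\delta|pq|)$ but with tiny horizontal displacement from $p$---is simply impossible for an angle-bounded path out of $p$: if every edge is within $O(\delta)$ of horizontal, the height at $x$-displacement $\eta$ is at most $O(\delta)\eta$, not $O(\delta)|pq|$. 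So no case analysis is needed.
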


\begin{proof}
We choose $\delta=O(\sqrt\epsilon)$ (with a constant of proportionality determined later).  Consider the $O(1/\delta)$ angles, $\theta_0, \theta_1, \ldots$, in $[0,2\pi)$ that are multiples of $2\delta$.  We apply 
Lemma~\ref{lem:wedged} for each angle $\theta_i$ with $\delta$ as defined.
We overlay the resulting system of $O(n/\delta)$ line segments; when two line segments from different arcs both have the same angle and starting point, we choose the longer of the two to use in the overlay. The resulting arrangement of line segments has $O(n^2/\epsilon)$ edges and vertices and total length $O((\ell\log n)/\epsilon)$ as required, and can be constructed in time $O(n^2/\epsilon)$ using standard line segment arrangement construction algorithms~\cite{ChaEde-JACM-92}.

To see that this is a spanner, we must show that every pair $(p,q)$ of points in $P$ may be connected by a short path. Let $\theta$ be the angle formed by the segment from $p$ to $q$, choose $i$ such that $\theta+\delta\le\theta_i\le\theta+3\delta$, and use Lemma~\ref{lem:wedged} to find a $(\theta_i\pm\delta)$-angle-bounded path $pp'$ in the spanner from $p$ to a $(\theta_i\pm\delta)$-extreme point $p'$. Because of the angle bound, $p'$ must be clockwise of $q$. Similarly, we may choose $\theta_j$ within $O(\delta)$ of $\pi+\theta$, and find a $(\theta_j\pm\delta)$-angle-bounded path $qq'$ to a $(\theta_j\pm\delta)$-extreme point $q'$ that is counterclockwise of $p$. These two paths (depicted in Figure~\ref{fig:crossed-paths}) must cross at at least one point $x$, and the combination of the path from $p$ to $x$ and from $x$ to $q$ lies within the spanner and is $(\theta\pm O(\delta))$-angle-bounded. By Lemma~\ref{lem:ab-short}, this path has length at most $1+O(\delta^2)$ times the distance between its endpoints, and by choosing the constant of proportionality in the definition of $\delta$ appropriately we can cause this factor to be at most $1+\epsilon$.
\end{proof}

\begin{figure}[t]
\centering\includegraphics{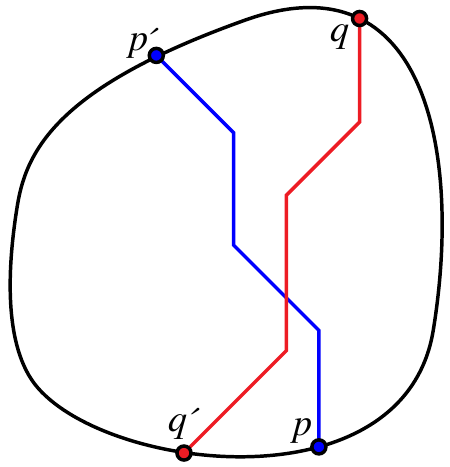}
\caption{Illustration of the proof of Lemma~\ref{lem:portal-spanner}}
\label{fig:crossed-paths}
\end{figure}

\section{Spanner construction}

We now have all the pieces for our overall spanner construction.

\begin{theorem}
\label{thm:planar-spanner}
Let $P$ be a planar point set whose Delaunay triangulation is given and has sharpest angle $\alpha$, and let $\epsilon>0$ be given. Then in time $O(n\log^2(1/(\alpha\epsilon))/(\alpha^2\epsilon^3))$ we can construct a plane Steiner $(1+\epsilon)$-spanner for $P$ with $O(n\log^2(1/(\alpha\epsilon))/(\alpha^2\epsilon^3))$ vertices and edges, and with total length $O(w(\MST)\log(1/(\alpha\epsilon))/(\alpha^2\epsilon))$.
\end{theorem}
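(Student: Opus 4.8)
The plan is to place \emph{portals} on every Delaunay edge, fill each Delaunay triangle with the plane portal‑spanner of Lemma~\ref{lem:portal-spanner}, overlay everything, and establish dilation by routing each pair of sites along its connecting segment, hopping from triangle to triangle through portals lying near that segment. Concretely, fix $\epsilon_1=c_1\alpha\epsilon$ and $\epsilon_2=c_2\epsilon$ for small constants $c_1,c_2$ to be chosen last. On each Delaunay edge $e$ place the set $P_{e,\epsilon_1}$ of Lemma~\ref{lem:portals}, which has $O(\tfrac1{\alpha\epsilon}\log\tfrac1{\alpha\epsilon})$ points and always includes the two endpoints of $e$. For each Delaunay triangle $T$ apply Lemma~\ref{lem:portal-spanner} with parameter $\epsilon_2$ to the $m=O(\tfrac1{\alpha\epsilon}\log\tfrac1{\alpha\epsilon})$ portals lying on $\partial T$, obtaining a plane Steiner $(1+\epsilon_2)$‑spanner of those portals whose edges lie inside $T$, with $O(m^2/\epsilon_2)$ vertices and edges, total length $O((\ell_T\log m)/\epsilon_2)$ where $\ell_T$ is the perimeter of $T$, built in $O(m^2/\epsilon_2)$ time. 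The output is the union of these per‑triangle spanners. Distinct triangles meet only along Delaunay edges, and each triangle's spanner touches $\partial T$ only at portals, so the union is a plane straight‑line graph whose vertex set contains $P$.

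For size and weight: the Delaunay triangulation has $O(n)$ triangles, so the union has $O(nm^2/\epsilon_2)=O(n\log^2(1/(\alpha\epsilon))/(\alpha^2\epsilon^3))$ vertices and edges and is built (given $\DT$) in that much time. For the weight, $\sum_T\ell_T\le 2\,w(\DT)$ since each Delaunay edge bounds at most two triangles, and Lemma~\ref{lem:wt-dt} gives $w(\DT)\le f_w(\alpha)\,w(\MST)$ with $f_w(\alpha)=\tfrac{1+\cos\alpha}{1-\cos\alpha}=O(1/\alpha^2)$; hence the total length is $O(\log m/\epsilon_2)\cdot O(w(\MST)/\alpha^2)=O(w(\MST)\log(1/(\alpha\epsilon))/(\alpha^2\epsilon))$.

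The heart is the dilation. Fix $p,q\in P$ and let $\ell$ be their line; $\overline{pq}$ crosses Delaunay edges $e_1,\dots,e_k$ at points $x_1,\dots,x_k$ in order along $\ell$, separating triangles $T_0\ni p,\ T_1,\dots,T_k\ni q$, and write $\sigma_i=\overline{pq}\cap T_i$, $x_0=p$, $x_{k+1}=q$. For each $i$ choose a circumcircle $O_i$ of a triangle having $e_i$ as a side; then $t_i:=\ell\cap O_i$ is a chord of $O_i$ crossing the chord $e_i$ at $x_i$, so Lemma~\ref{lem:portals} supplies a portal $P_i\in P_{e_i,\epsilon_1}$ within perpendicular distance $\epsilon_1|t_i|$ of $\ell$; inspecting the proofs of Lemmas~\ref{lem:portal-spacing} and~\ref{lem:portals}, this $P_i$ also lies within $O(\epsilon_1)\cdot|a_ix_i|$ of $x_i$ along $e_i$, where $a_i$ is the endpoint of $e_i$ closer to $x_i$. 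Route the $p$--$q$ path by concatenating, inside each $T_i$'s spanner, a path of length at most $(1+\epsilon_2)|P_iP_{i+1}|$ from $P_i$ to $P_{i+1}$ (with $P_0=p$, $P_{k+1}=q$, which are portals of the relevant boundaries). Split each segment $\overline{P_iP_{i+1}}$ into its components along and perpendicular to $\ell$. The perpendicular parts sum to at most $2\sum_i\epsilon_1|t_i|$; since $p,q$ are sites, no $t_i$ extends past $p$ or $q$ (that would place a site in the interior of a Delaunay circumcircle), so each $t_i\subseteq\overline{pq}$, and Lemma~\ref{lem:bounded-coverage} gives $\sum_i|t_i|\le f_e(\alpha)|pq|=\tfrac{2\pi}{\alpha}|pq|$. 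For the along‑$\ell$ parts, $|a_ix_i|$ is at most the distance from $x_i$ to the vertex of $T_i$, and to the vertex of $T_{i-1}$, that the segment's passage through that triangle cuts off; in the corresponding cut‑off triangle the angle at the cut vertex is a Delaunay angle $\ge\alpha$, so by the law of sines $|a_ix_i|\le\min(|\sigma_{i-1}|,|\sigma_i|)/\sin\alpha$. Since $\sum_i|\sigma_i|\le|pq|$, the along‑$\ell$ parts sum to at most $|pq|$ plus $2\sum_iO(\epsilon_1)|a_ix_i|\le O(\epsilon_1/\sin\alpha)|pq|$. Altogether the route has length at most $(1+\epsilon_2)\bigl(1+O(\epsilon_1 f_e(\alpha))+O(\epsilon_1/\sin\alpha)\bigr)|pq|=(1+\epsilon_2)(1+O(c_1\epsilon))|pq|$ (using $\alpha\le\pi/3$), which is at most $(1+\epsilon)|pq|$ once $c_1,c_2$ are chosen small enough.

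The one case the last step glosses over is when $x_i$ falls in the innermost gap of $P_{e_i,\epsilon_1}$, within distance $O(\epsilon_1^2|e_i|)$ of an endpoint $a_i$, so that $P_i=a_i$ and the displacement bound loses its factor $\epsilon_1$. If $a_i$ is the cut‑off vertex of neither $T_{i-1}$ nor $T_i$, then one of $\sigma_{i-1},\sigma_i$ has length $\Omega(|e_i|\sin\alpha)$, against which $O(\epsilon_1^2|e_i|)$ is again $\epsilon_1$‑small; otherwise $\ell$ is grazing the site $a_i$, and we instead send the portion of the route through the triangles surrounding $a_i$ straight through $a_i$ itself (a portal of all of them), paying only twice the distance from $a_i$ to $\ell$. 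Because $\overline{pq}$ is a single segment it grazes each site along one contiguous stretch, and a charging argument over those stretches bounds the total extra length by $O(\epsilon|pq|)$. I expect this boundary bookkeeping, rather than the main charging scheme, to be the most delicate part of the proof.
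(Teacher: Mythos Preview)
Your construction, parameters, and size/weight/time analysis are exactly the paper's, and your dilation argument uses the same scheme: route $p\to q$ through one portal $P_i$ on each crossed Delaunay edge and charge the total detour to the chords $t_i$, whose lengths sum to at most $f_e(\alpha)\,|pq|$ by Lemma~\ref{lem:bounded-coverage}.

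Where you diverge is in controlling the displacement of $P_i$ from the actual crossing point $x_i$. You bound it by $O(\epsilon_1)\,|a_ix_i|$ and then relate $|a_ix_i|$ to $|\sigma_{i-1}|,|\sigma_i|$ via a law-of-sines step, which is what forces the endpoint special case you flag as the most delicate part. The paper avoids all of this: the proofs of Lemmas~\ref{lem:portal-spacing} and~\ref{lem:portals} already yield $|P_ix_i|\le O(\epsilon_1)\,|t_i|$ uniformly. In the main case the chosen portal is one of the two neighbors $B,C$ of $x_i$, so $|P_ix_i|\le|BC|/2\le\epsilon_1\min(|AB|,|CD|)\le\epsilon_1\,|t_i|$; near an endpoint $p_0$, power of a point gives $|t_i|\ge 2\sqrt{|p_0x_i|\,(|e_i|-|p_0x_i|)}$, hence $|p_0x_i|/|t_i|=O(\sqrt{|p_0x_i|/|e_i|})=O(\epsilon_1)$ as well. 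With that single observation the detour bound is just
\[
\sum_i |P_iP_{i+1}|\;\le\;\sum_i\bigl(|x_ix_{i+1}|+|P_ix_i|+|P_{i+1}x_{i+1}|\bigr)\;\le\;|pq|+2\sum_i|P_ix_i|\;\le\;(1+O(\epsilon))\,|pq|,
\]
and your along-$\ell$ accounting and grazing bookkeeping become unnecessary.
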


\begin{proof}
We apply Lemma~\ref{lem:portals} to place portals along the edges of the triangulation, such that each chord $s$ of a Delaunay circle passes within distance $O(\alpha\epsilon|s|)$ of a portal on each Delaunay edge that it crosses. We then apply Lemma~\ref{lem:portal-spanner} within each Delaunay triangle to construct a $1+O(\epsilon)$-spanner for the portals on the boundary of that triangle.

The construction time is bounded by the time to construct the spanners within each triangle. Since there are $O(\log(1/(\alpha\epsilon))/(\alpha\epsilon))$ portals on each triangle, the time to construct the spanner for a single triangle is $O(\log^2(1/(\alpha\epsilon))/(\alpha^2\epsilon^3))$ and the total time over the whole graph is $O(n\log^2(1/(\alpha\epsilon))/(\alpha^2\epsilon^3))$. This bound also applies to the number of vertices and edges in the constructed spanner.
By Lemma~\ref{lem:wt-dt}, the total perimeter of the Delaunay triangles is $O(w(\MST)/\alpha^2)$,
and combining this bound with the length bound of Lemma~\ref{lem:portal-spanner} gives total length $O(w(\MST)\log(1/(\alpha\epsilon))/(\alpha^2\epsilon))$ for the spanner edges.

To show that this is a spanner, we must find a short path between any two of the input points $p$ and $q$. By Lemma~\ref{lem:portal-spacing}, the line segment $pq$ passes within distance $O(\alpha\epsilon|s|)$ of a portal on every Delaunay edge that it crosses, where $s$ is the chord of one of the Delaunay circles for the crossed edge. By Lemma~\ref{lem:bounded-coverage}, the total length of all of these chords is $O(|pq|/\alpha)$, so we may replace $pq$ by a polygonal path that contains a portal on each crossed Delaunay edge, expanding the total length by a factor of at most $1+O((\alpha\epsilon)/\alpha)=1+O(\epsilon)$. Then, by Lemma~\ref{lem:portal-spanner} we may replace each portal-to-portal segment in this path by a path within the spanner for the portals in a single Delaunay triangle, again expanding the total length by a factor of at most $1+O(\epsilon)$. By choosing constants of proportionality appropriately, we may make the total length expansion be at most $1+\epsilon$.
\end{proof}

\section{Approximating the TSP}

An algorithm of Klein~\cite{Klein08} provides a linear time approximation scheme for the traveling salesperson problem in a planar graph. Its first step is to find a low-weight spanner of the graph.
A subsequent paper, also by Klein~\cite{Klein06} describes an algorithm that, given a planar graph $G$ and a subset $S$ of the nodes, finds a subgraph of $G$ whose weight is $O(\epsilon^{-4})$ times
that of the minimum-weight tree spanning $S$ and that is a
$(1+\epsilon)$-spanner for the shortest-path metric on $S$~\cite{Klein06}. This subset spanner construction can be substituted for the first step of Klein's approximation scheme, resulting in an algorithm for approximating the TSP on the subset $S$. However, in this more general result, the spanner construction takes time $O(n\log n)$, so the total time for the approximation scheme is $O(n\log n)$ for any fixed $\epsilon>0$.

The first step for approximating Euclidean TSP is to round the coordinates of the sites to their nearest integer coordinates on a sufficiently fine grid.  Doing so allows us to take advantage of $O(n \sqrt{\log \log n})$ Delaunay triangulation~\cite{BucMul-JACM-11} made possible by fast integer sorting~\cite{HanTho-FOCS-02}.
We may then substitute our own faster low-weight spanner construction for the first step of the approximation scheme. The remaining steps of the approximation use only the facts that the points we are seeking to connect into a tour are vertices in a planar graph, and that the whole graph has total weight proportional to the minimum spanning tree of the given points. Thus, we obtain the following result:

\begin{corollary}
For any fixed $\alpha$ and $\epsilon$, we may find a $(1+\epsilon)$-approximation to the optimal traveling salesman tour of sets of $n$ points in the plane with sharpest Delaunay triangulation angle at most $\alpha$ in time $O(n)$ plus the time needed to construct the Delaunay triangulation.
\end{corollary}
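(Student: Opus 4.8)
The plan is to reduce the Euclidean TSP on $P$ to the TSP on a planar graph — namely the plane Steiner spanner of Theorem~\ref{thm:planar-spanner} — and then invoke Klein's linear-time approximation scheme~\cite{Klein08} for planar-graph TSP, applied to the subset $P$ of that spanner's vertices. Everything is arranged so that the only superlinear cost is the Delaunay triangulation.

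First, I would normalize the instance. Since $\alpha$ and $\epsilon$ are fixed, I will run everything with accuracy parameter $\epsilon/3$ in place of $\epsilon$. Translate and scale $P$ so that its diameter is $\Theta(n^3)$ and round every coordinate to the nearest integer; each site then moves by $O(1)$, so the length of any tour changes by $O(n)$, which is $O(1/n^2)$ times the diameter and hence a negligible fraction of $\OPT$ (the diameter is a lower bound on $\OPT$), easily absorbed into one of the $\epsilon/3$ slacks. After rounding, the coordinates are integers of polynomial magnitude, so the Delaunay triangulation can be built in $O(n\sqrt{\log\log n})$ time using Buchin--Mulzer~\cite{BucMul-JACM-11} with fast integer sorting~\cite{HanTho-FOCS-02}; this is the ``time needed to construct the Delaunay triangulation'' in the statement.

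Second, I would build the spanner and run Klein's scheme. Applying Theorem~\ref{thm:planar-spanner} with parameter $\epsilon/3$ produces a plane Steiner $(1+\epsilon/3)$-spanner $H$ for $P$ which, because $\alpha$ and $\epsilon$ are constants, has $O(n)$ vertices and edges and total weight $O(w(\MST))=O(\OPT)$ (removing an edge from an optimal tour yields a spanning tree, so $w(\MST)\le\OPT$). Now run Klein's planar-graph TSP approximation scheme~\cite{Klein08} on $H$ with subset $P$ and accuracy $\epsilon/3$, but \emph{substituting $H$ itself} for the low-weight-spanner step of that algorithm — this is precisely the step where Klein~\cite{Klein06} would otherwise thin $H$ into a subset spanner of weight $O(\epsilon^{-4})$ times the Steiner tree of $P$; our $H$ is already planar with weight $O(\OPT)$, which is all the subsequent stages use. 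Since $H$ has $O(n)$ edges and $\epsilon$ is fixed, this runs in $O(n)$ time and yields a closed walk $W$ in $H$ through all of $P$ of length at most $(1+\epsilon/3)$ times the optimum such walk in $H$. To convert quality back to the Euclidean setting: routing each edge of the optimal Euclidean tour of the rounded points through $H$ using its $(1+\epsilon/3)$ dilation shows the optimal walk in $H$ has length at most $(1+\epsilon/3)\OPT'$ (with $\OPT'$ the rounded-instance optimum); conversely, replacing each maximal portion of $W$ between consecutive points of $P$ by the straight segment joining them gives a Euclidean tour of length at most $|W|$, since every spanner edge is a straight segment and the triangle inequality only shortens. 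Chaining the three $(1+\epsilon/3)$ factors with the rounding error gives a tour of the original points of length at most $(1+\epsilon)\OPT$, in $O(n)$ time past the triangulation.

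The step that needs the most care is the interface with Klein's algorithm: one must verify that substituting $H$ for the spanner step of~\cite{Klein08} is legitimate — i.e., that the subsequent dynamic-programming stages depend only on $H$ being planar with total weight $O(\OPT)$ and on $P$ being a vertex subset, not on any finer structure of the spanner Klein would otherwise construct — and that the accuracy parameters compose as claimed. A secondary subtlety is that coordinate rounding could in principle spoil the hypothesis ``sharpest Delaunay angle at most $\alpha$''; the clean resolutions are either to state the corollary for point sets that already satisfy this bound on a polynomially-fine integer grid, or to observe that a sufficiently fine grid perturbs every Delaunay angle by an arbitrarily small amount, treating near-degenerate triangles of the perturbed triangulation separately.
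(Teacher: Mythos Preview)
Your proposal is correct and follows essentially the same route as the paper: build the spanner of Theorem~\ref{thm:planar-spanner}, substitute it for the spanner step in Klein's planar-graph TSP scheme~\cite{Klein08}, and observe that the remaining stages need only planarity and weight $O(w(\MST))$. You supply more care than the paper does---the explicit $\epsilon/3$ accounting, the shortcut argument for converting the graph walk back to a Euclidean tour, and the flagged subtlety that rounding may perturb the Delaunay angles---but the underlying argument is the same.
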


It would also be possible to design a TSP approximation scheme more directly using
a framework used by Borradaile, Klein and
Mathieu~\cite{BKM09} to solve the Steiner
tree problem; details on how this framework applies to TSP were given by Borradaile,
Demaine and Tazari~\cite{BDT12} in generalizing the planar framework
to bounded-genus graphs.  Their algorithm, as interpreted for point sets in the Euclidean plane, would partition the triangles of the Delaunay
triangulation into layers according to their depth from the infinite face in the
dual graph so that the sum of the lengths of line segments common to different layers is an $\epsilon$
fraction of the optimal solution.  This can be achieved with depth
$f_w(\alpha)/\epsilon = O\left(\frac{1}{\alpha^2\epsilon}\right)$; each layer has tree-width polynomial in this depth.  The problem is then solved using
dynamic programming, where the dynamic programs are additionally indexed by the portals.  The base
cases are made to correspond to the triangles in which the
intersection with any tour can be enumerated.  The size of the dynamic
program is therefore bounded singly-exponentially in $1/\epsilon$ and
$1/\alpha$.  This task is slightly easier in the geometric setting
than in the planar graph setting as computing shortest paths is trivial.

\section{Looking ahead}

The most obvious question posed by this work is: how do we remove the
dependence on $\alpha$?  The dependence on $\alpha$ appears in two
places: in the number of circumcircles that enclose a point and in the
weight of the Delaunay triangulation.  The former affects the error
incurred by using portals and the latter affects the weight of the
final spanner.  We believe that it should be possible
to remove these dependencies on $\alpha$ by treating groups of skinny
triangles as a single region.  In fact, using this idea, we are able
to remove each dependency separately, but not together.  By removing
a long edge connecting two skinny triangles, we reduce the number of
portals we must reroute through, but the number of skinny triangles
that define a given region could be many, and adding the edges to
build the spanner within this region will depend on this number.  On
the other hand, we could only consider regions defined by a small
number of triangles, but this may not be enough to reduce the number
of circumcircles a chord is within.

An alternative approach to removing this dependence would be to augment the input to remove all sharp angles from its Delaunay triangulation, but this may sometimes need a number of added points that cannot be bounded by a function of $n$~\cite{BerEppGil-JCSS-94}. A construction based on quadtrees shows that every point set may be augmented with $O(n)$ points so that the Delaunay triangulation has no obtuse angles~\cite{BerEppGil-JCSS-94}; the resulting triangulation may also be modified to have the bounded circumcircle enclosure property, despite having some sharp angles, and may be constructed as efficiently as sorting~\cite{BerEppTen-IJCGA-99}. Applying our spanner construction method  to the augmented input would allow us to completely eliminate the dependence on $\alpha$ in the time and output complexity of our spanners, but at the expense of losing control over their total weight. Once a spanner is constructed in this way, Klein's subset spanner~\cite{Klein06} can be used to reduce its weight, allowing it to be used in an algorithm to approximate the TSP for arbitrary planar point sets in  time $O(n\log n)$ for any fixed $\epsilon>0$, but this does not improve on the time bound of Rao and Smith~\cite{RaoSmi-STOC-98}.

Unlike in the methods of Arora~\cite{Aro-JACM-98},
Mitchell~\cite{Mit-SJC-99}, Rao and Smith~\cite{RaoSmi-STOC-98} and
Borradaile, Klein and Mathieu~\cite{BKM09}, the approximation error in our method is
charged locally as opposed to globally.  In the quad-tree based
approximation schemes, the error incurred is charged to the dissection
lines that form the quad tree.  In the planar approximation-scheme
framework for Steiner tree, the error incurred is charged to an
$O(\MST)$-weight subgraph called the mortar graph which acts much like
the quad-tree decomposition.  Our charging scheme is much more similar
to that used by Klein for the subset tour problem in planar
graphs~\cite{Klein06}.  However, in applying the planar
approximation-scheme frameworks of either Klein or Borradaile, Klein
and Mathieu, some error is incurred in partitioning the graph into
pieces of bounded treewidth.  This error is proportional to the graph
that is partitioned, which in our case is either the spanner (for
Klein's scheme) or the triangulations (for Borradaile~et.~al's
scheme).  This error is indirectly related to $\OPT$ by way of the
$O(\MST)$ weight of the spanner and triangulation.  By current
techniques, this source of error does not seem avoidable.

Finally, our spanner construction more closely ties Euclidean and
planar distance metrics together.  By unifying the approximation
schemes in these two related metrics, it may be possible to generalize
these methods to other two-dimensional metrics.

\raggedright
\bibliographystyle{abuser}
\bibliography{2dtsp}

\end{document}